\newtheorem{theorem}[equation]{Theorem}
\newtheorem{lemma}[equation]{Lemma}
\newtheorem{definition}[equation]{Definition}
\newtheorem{hypothesis}[equation]{Hypothesis}
\newcommand{\N}{\mathbb{N}}
\newcommand{\eps}{\varepsilon}
\renewcommand{\le}{\leqslant}
\renewcommand{\ge}{\geqslant}
\renewcommand{\to}{\longrightarrow}
\DeclareMathOperator{\CG}{CG}
\DeclareMathOperator{\VG}{VG}
\DeclareMathOperator{\LCWIS}{LCWIS}
\DeclareMathOperator{\WLCWIS}{WLCWIS}
\newcommand{\sequence}[1]{\langle #1 \rangle}
\newcommand{\A}{\textnormal{A}}
\newcommand{\B}{\textnormal{B}}
\newcommand{\Y}{\textnormal{Y}}
\newcommand{\Z}{\textnormal{Z}}
\begin{document}

\title{
  Why is it hard to beat $O(n^2)$ for\\ Longest Common Weakly Increasing Subsequence?
}
\author{
  Adam Polak\thanks{
    This work was supported by the Polish Ministry of Science
    and Higher Education program \it{Diamentowy Grant}.
  }\\
  \small{Department of Theoretical Computer Science}\\
  \small{Faculty of Mathematics and Computer Science}\\
  \small{Jagiellonian University}\\
  \texttt{polak@tcs.uj.edu.pl}
}
\date{
}
\maketitle

\begin{abstract}
The Longest Common Weakly Increasing Subsequence problem (LCWIS) is a variant of
the classic Longest Common Subsequence problem (LCS).
Both problems can be solved with simple quadratic time algorithms.
A recent line of research led to a number of matching conditional lower bounds
for LCS and other related problems. However, the status of LCWIS remained open.

In this paper we show that LCWIS cannot be solved in $O(n^{2-\eps})$ time unless
the Strong Exponential Time Hypothesis (SETH) is false.

The ideas which we developed can also be used to obtain a lower bound based on
a safer assumption of $\mathsf{NC}$-SETH, i.e.~a version of SETH which talks
about $\mathsf{NC}$ circuits instead of less expressive CNF formulas.
\end{abstract}

\clearpage

\section{Introduction}

Despite attracting interest of many researches,
both from theoretical computer science and computational biology communities,
for many years the classic Longest Common Subsequence problem (LCS) have not
seen any significant improvement over the simple $O(n^2)$ dynamic programming
algorithm. The current fastest, $O(n^2/\log^2n)$ algorithm
by Masek and Paterson~\cite{MaPa80}, dates back to 1980.

Difficulties in making progress on the LCS inspired studying numerous related
problems, among them the Longest Common Increasing Subsequence problem (LCIS),
for which Yang, Huang, and Chao~\cite{Yang05}
found a quadratic time dynamic programming algorithm.
Their algorithm was later improved by Sakai~\cite{Sakai} to work in linear space.
Even though both these algorithms are devised to compute the Longest Common
Increasing Subsequence, they can be easily modified to compute the
Longest Common Weakly Increasing Subsequence (LCWIS).
The latter problem, first introduced by Kutz et al.~\cite{Kutz11},
can be solved in linear time in the special case of a $3$-symbols alphabet,
as proposed by Duraj~\cite{Duraj13}.
However, despite some attempts over the last decade, no subquadratic time
algorithm has been found for the general case of LCWIS.

A recent line of research led to a number of conditional lower bounds for
polynomial time solvable problems. In particular
Abboud, Backurs, and Vassilevska Williams~\cite{Abboud15},
and independently Bringmann and K\"unnemann~\cite{Bringmann15}
proved that LCS cannot be solved in $O(n^{2-\eps})$ time unless the Strong
Exponential Time Hypothesis (SETH) is false.

\begin{hypothesis}[Strong Exponential Time Hypothesis]
There is no $\eps > 0$ such that for all $k \ge 3$, $k$-SAT on $N$ variables
can be solved in $O(2^{(1-\eps)N})$ time.
\end{hypothesis}

Moreover, Bringmann and K\"unnemann~\cite{Bringmann15} proposed a general framework
for proving quadratic time hardness of sequence similarity measures. Within this
framework, it is sufficient to show that a similarity measure
\emph{admits an alignment gadget} to prove that this similarity measure cannot be
computed in $O(n^{2-\eps})$ time unless SETH is false.
Besides LCS many other similarity measures, e.g.~Edit Distance and Dynamic Time
Warping, fall into this framework.
However, it seems that neither LCIS nor LCWIS admits an alignment gadget.

In this paper we show that LCWIS cannot be solved in $O(n^{2-\eps})$ time unless
SETH is false. We do this by proving the following theorem.

\begin{theorem}
\label{th:max-cnf-sat}
If the Longest Common Weakly Increasing Subsequence problem for two sequences
of length $n$ can be solved in $O(n^{2-\eps})$ time,
then given a CNF formula on $N$ variables and $M$ clauses
it is possible to compute the maximum number of satisfiable clauses
(MAX-CNF-SAT) in $O(2^{(1-\eps/2)N} \mbox{poly}(M))$ time.
\end{theorem}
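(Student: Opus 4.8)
The plan is to follow the ``split and list'' template behind SETH-based quadratic lower bounds, adapting it to deal with the fact that the weakly increasing constraint is global rather than local. First I would split the $N$ variables into two halves $V_\A, V_\B$ of size $N/2$, enumerate all $2^{N/2}$ partial assignments of each half, and record for each partial assignment $a$ of $V_\A$ the length-$M$ bit vector $\chi(a)$ whose $j$-th bit says whether clause $C_j$ contains a literal over $V_\A$ satisfied by $a$ (``$a$ locally satisfies $C_j$''); define $\chi(b)$ for partial assignments of $V_\B$ symmetrically. A full assignment $a\cup b$ satisfies $C_j$ exactly when $a$ or $b$ locally satisfies it, so the number of clauses satisfied by $a\cup b$ equals the number of coordinates where $\chi(a)$ or $\chi(b)$ is $1$. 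Thus MAX-CNF-SAT reduces to the problem of, over $2^{N/2}$ vectors on each side, finding the pair maximising the size of the union of their supports. All of this costs only $O(2^{N/2}\,\mathrm{poly}(M))$.

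Second, I would realise the ``union of supports'' count by the LCWIS of a pair of short sequences over a linearly ordered alphabet of size $\Theta(M)$, split into clause blocks $a_1 < b_1 < a_2 < b_2 < \dots < a_M < b_M$. In the sequence built from $a$, clause $j$ contributes the two symbols $a_j\,b_j$ if $a$ locally satisfies $C_j$ and the single symbol $b_j$ otherwise; in the sequence built from $b$, clause $j$ contributes $b_j\,a_j$ if $b$ locally satisfies $C_j$ and the single symbol $a_j$ otherwise. A two-symbol case analysis gives local LCWIS $=1$ unless neither side locally satisfies $C_j$, in which case it is $0$; and because each clause uses its own block of strictly increasing symbols, and symbol $x_j$ occurs in each sequence only inside clause $j$'s piece, any common weakly increasing subsequence visits the blocks in order and splits into one common weakly increasing subsequence per clause. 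Hence the per-pair LCWIS equals precisely the number of clauses satisfied by $a\cup b$ (one may also prepend a fixed number of ``free'' matching symbols if a positive base value turns out convenient later).

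The main obstacle, and the reason LCWIS escapes the generic alignment-gadget framework, is assembling these per-pair gadgets into two global sequences $P^\star, Q^\star$ — concatenations of the assignment gadgets interleaved with separator symbols — whose overall LCWIS is $\mathit{const} + \max_{a,b}(\#\text{clauses satisfied by }a\cup b)$. With one global weakly increasing constraint, nothing a priori stops a common subsequence from ``splicing'': taking clauses $1,\dots,k$ from the gadget pair $(a,b)$ and clauses $k{+}1,\dots,M$ from a different pair $(a',b')$, as long as blocks are visited in increasing order. Since every clause (barring degenerate ones) is locally satisfied by \emph{some} partial assignment, unrestricted splicing would push the LCWIS up to the trivial value $M$ and wreck the reduction. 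So the heart of the proof is to design the separators and padding so that an optimal common weakly increasing subsequence is forced into a monotone block-to-block alignment of $P^\star$ and $Q^\star$ in which at most one assignment gadget of $P^\star$ is matched ``for content'' against at most one gadget of $Q^\star$, all the others being absorbed into separators, and its content contribution is exactly that pair's LCWIS. I expect this to require separator symbols whose values strictly increase across blocks (so the alignment cannot back up), enough separator copies to make any ``mixed'' alignment strictly shorter than the honest one, and a somewhat delicate exchange argument to exclude splicing; making these parameters mutually consistent is where the real work lies.

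Finally, the accounting: $P^\star$ and $Q^\star$ have length $n = 2^{N/2}\mathrm{poly}(M)$ and are built in time linear in their length, so a single call to a hypothetical $O(n^{2-\eps})$ LCWIS algorithm yields MAX-CNF-SAT in time $O(2^{N/2}\mathrm{poly}(M)) + O(n^{2-\eps}) = O(2^{(1-\eps/2)N}\mathrm{poly}(M))$ — for $\eps < 1$ the $n^{2-\eps}$ term dominates the preprocessing and construction — which is the claimed bound.
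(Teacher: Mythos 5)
Your outline matches the paper's route step for step: split-and-list to reduce MAX-CNF-SAT to a best-pair problem over $2^{N/2}$ vectors per side, per-clause coordinate gadgets over disjoint increasing symbol blocks whose concatenation gives a per-pair LCWIS equal to the quantity of interest, a combination step forcing the global LCWIS to select the best pair, and the final accounting (which you do correctly; your two-symbol coordinate gadget is a clean variant of the paper's three-symbol one, and computing the max directly even saves the paper's $\log M$ binary-search factor).

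The problem is that your third step --- the only one you flag as ``where the real work lies'' --- is precisely the paper's main technical contribution, and you state its required properties without constructing anything or proving that splicing is impossible. This is a genuine gap, not a routine detail: the whole reason LCWIS resists the alignment-gadget framework is that no local separator trick obviously prevents a common weakly increasing subsequence from taking low clauses from one gadget pair and high clauses from another. Moreover, the one concrete design choice you float, separator values that \emph{strictly increase across blocks}, points away from what actually works: since every content gadget spans the same symbol range, a separator must sit either entirely below or entirely above all content symbols, so per-block increasing separators cannot be matched flexibly on both sides of the content. The paper instead reuses just four extra symbols $\A < \B < \sigma < \Y < \Z$ (for all content symbols $\sigma$), with weights $\ell, 2\ell, 2\ell, \ell$ where $\ell$ bounds the weight of any gadget, writes the separators in \emph{decreasing} order ($\Y\B$ in $P_1$, $\Z\Y\B\A$ in $P_2$) so that each separator fragment contributes at most one symbol to any weakly increasing subsequence, and pads $P_1$ with $\A^{2n}, \Z^{2n}$ and $P_2$ with $(\Z\Y\B\A)^n$ on both ends so every pair $(i,j)$ attains the same separator contribution $(4n-2)\ell$. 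The anti-splicing argument is then a weight count: touching $k\ge 2$ gadgets on one side blocks $k-1$ separator fragments, each worth at least as much as an extra gadget can contribute, so any spliced subsequence is no heavier than the honest one. Without this construction (or an equivalent one) and its case analysis, together with the weighted-to-unweighted reduction by symbol repetition, the proof is incomplete at its central point.
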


Our reduction is modelled after previous hardness results based on SETH,
in particular~\cite{Abboud15} and~\cite{Backurs15}.
We go through the Most-Orthogonal Vectors problem, and construct
\emph{vector gadgets} such that two vector gadgets have large LCWIS iff
the corresponding vectors have small inner product.
The crucial ingredient is a construction that lets us combine many vector
gadgets into two sequences such that their LCWIS depends on the largest LCWIS
among all pairs of vector gadgets.

Unlike $\mathsf{P}\neq\mathsf{NP}$ and several other common assumptions
for conditional lower bounds in computational complexity, SETH is considered
by many not a very safe working hypothesis.
Recently, Abboud et al.~\cite{Abboud16} came up with a weaker assumption,
which still allows to prove many previous SETH-based lower bounds.
More specifically, they propose a reduction from satisfiability of
\emph{nondeterministic Branching Programs}~\cite{Arora} (BP-SAT) to LCS and any
other similarity measure which admits an alignment gadget.
Their reduction implies that the existence of a strongly subquadratic time
algorithm for LCS not only refutes SETH but also has stronger consequences,
e.g.~the existence of an~$O(2^{(1-\delta)N})$ time algorithm for satisfiability
of $\mathsf{NC}$ circuits. For an in-depth discussion of consequences of their
reduction and motivations to study such reductions please refer to the original
paper~\cite{Abboud16}. At the end of our paper we briefly explain how the ideas
which we developed can be used to obtain a reduction from BP-SAT to LCWIS.

Unfortunately, our techniques are not sufficient to prove a similar lower bounds
for LCIS.

\section{Preliminaries}

Let us start with the formal definition of the LCWIS problem.

\begin{definition}[Longest Common Weakly Increasing Subsequence]
Given two sequences $A$ and $B$ over an alphabet $\Sigma$ with a linear
order $\le_\Sigma$, the \emph{Longest Common Weakly Increasing Subsequence}
problem asks to find a sequence $C$ such that
\begin{itemize}
  \item it is weakly increasing with respect to~$\le_\Sigma$,
  \item it is a subsequence of both $A$ and $B$,
  \item and its length is maximum possible.
\end{itemize}
We denote the length of $C$ by $\LCWIS(A, B)$.
\end{definition}

\noindent
For example, $\LCWIS(\sequence{1,2,5,2,5,3}, \sequence{2,4,5,2,3,4}) = 3$,
and the optimal subsequence is $\sequence{2,2,3}$.

To simplify further arguments we introduce, as an auxiliary problem, the
weighted version of LCWIS.

\begin{definition}[Weighted Longest Common Weakly Increasing Subsequence]
Given two sequences $A$ and $B$ over an alphabet $\Sigma$ with a linear
order $\le_\Sigma$ and the \emph{weight function} $w:\Sigma \to \N_+$,
the \emph{Weighted Longest Common Weakly Increasing Subsequence} (WLCWIS)
problem asks to find a sequence $C$ such that
\begin{itemize}
  \item it is weakly increasing with respect to $\le_\Sigma$,
  \item it is a subsequence of both $A$ and $B$,
  \item and its total weight, i.e.~$\sum_{i=1}^{|C|}w(C_i)$, is maximum possible.
\end{itemize}
We denote the total weight of $C$ by $\WLCWIS(A, B)$.
\end{definition}

\begin{lemma}
\label{th:wlcwis-to-lcwis}
For a sequence $X = \sequence{X_1, X_2, \ldots, X_{|X|}}$
let $\widehat{X}$ denote a sequence obtained from $X$ by replacing each symbol
$a$ by its $w(a)$ many copies, i.e.
$$\widehat{X} = X_1^{w(X_1)}\ X_2^{w(X_2)}\ \ldots\ X_{|X|}^{w(X_{|X|})}.$$

Computing the $\WLCWIS$ of two sequences $A$ and $B$ of total weight
at most $n$ can be reduced to computing the $\LCWIS$ of two sequences
$\widehat{A}$ and $\widehat{B}$ of length at most $n$.
\end{lemma}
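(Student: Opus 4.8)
The plan is to show that $\WLCWIS(A,B) = \LCWIS(\widehat{A},\widehat{B})$ and that the construction of $\widehat{A},\widehat{B}$ respects the length bound, so that any algorithm for $\LCWIS$ immediately yields one for $\WLCWIS$ with the same running time up to the (linear) cost of building $\widehat{A},\widehat{B}$. The bound on lengths is immediate: $|\widehat{X}| = \sum_{i=1}^{|X|} w(X_i)$ is exactly the total weight of $X$, which is at most $n$ by hypothesis. The alphabet and its linear order $\le_\Sigma$ are unchanged, and the weight function plays no further role once the symbols have been duplicated.

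The heart of the argument is the equality of the two optimal values, which I would prove by two inequalities. For $\WLCWIS(A,B) \le \LCWIS(\widehat{A},\widehat{B})$: given an optimal common weakly increasing subsequence $C = \sequence{C_1,\dots,C_{|C|}}$ of $A$ and $B$, replace it by $\widehat{C} = C_1^{w(C_1)}\cdots C_{|C|}^{w(C_{|C|})}$. Since $C$ is weakly increasing, so is $\widehat{C}$ (repeating a symbol keeps it weakly increasing, and the transitions between distinct blocks are inherited from $C$). Moreover $\widehat{C}$ is a subsequence of $\widehat{A}$: if $C$ embeds into $A$ by picking positions $p_1 < p_2 < \cdots$, then the $w(C_j)$ copies of $C_j$ in $\widehat{C}$ can be matched to the block of $w(A_{p_j}) = w(C_j)$ copies of $A_{p_j}$ sitting at position $p_j$ inside $\widehat{A}$, and these blocks occur in increasing order; symmetrically for $\widehat{B}$. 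Hence $\widehat{C}$ is a common weakly increasing subsequence of $\widehat{A},\widehat{B}$ of length $\sum_j w(C_j) = \WLCWIS(A,B)$.

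For the reverse inequality $\LCWIS(\widehat{A},\widehat{B}) \le \WLCWIS(A,B)$: take an optimal common weakly increasing subsequence $D$ of $\widehat{A}$ and $\widehat{B}$. Let $C$ be the sequence of distinct symbols appearing in $D$, in their order of first appearance; since $D$ is weakly increasing, $C$ is weakly increasing and $D$ consists of consecutive blocks $C_1^{k_1} C_2^{k_2}\cdots C_m^{k_m}$ with each $k_i \ge 1$ and, crucially, $k_i \le w(C_i)$ — the copies of $C_i$ used in $D$ must all come from a single block of identical symbols in $\widehat{A}$ (all copies of a fixed symbol $a$ in $\widehat{A}$ occupy disjoint length-$w(a)$ blocks, and $D$ being weakly increasing cannot jump out of one such block and back into a later one without an intervening strictly larger symbol), so $k_i \le w(C_i)$; likewise via $\widehat{B}$. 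This means $C$, as a sequence of positions, embeds into $A$ (pick the position of the $\widehat{A}$-block used for $C_i$) and into $B$, so $C$ is a common weakly increasing subsequence of $A,B$, and $\WLCWIS(A,B) \ge \sum_i w(C_i) \ge \sum_i k_i = |D| = \LCWIS(\widehat{A},\widehat{B})$.

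I expect the main obstacle to be the second inequality, specifically the claim that all occurrences of a fixed symbol in $D$ must be drawn from one contiguous block of that symbol in $\widehat{A}$ — i.e.\ that $D$ cannot pick copies of $a$ from two different original positions of $A$. This needs the observation that in $\widehat{A}$, between any two distinct original positions carrying the symbol $a$ there is no occurrence of $a$ coming ``for free'', and since $D$ is weakly increasing, once it leaves a run of $a$'s and later returns to $a$, everything strictly in between is equal to $a$ as well, forcing the whole stretch to lie inside one block. Once this structural fact is pinned down, the rest is bookkeeping. The running-time conclusion of Theorem~\ref{th:max-cnf-sat}'s machinery is unaffected: building $\widehat{A},\widehat{B}$ takes $O(n)$ time, so a hypothetical $O(n^{2-\eps})$ algorithm for $\LCWIS$ transfers verbatim to $\WLCWIS$.
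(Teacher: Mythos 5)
Your first inequality, $\WLCWIS(A,B) \le \LCWIS(\widehat{A},\widehat{B})$, is correct and is exactly the paper's argument. The second direction, however, rests on a structural claim that is false: you assert that all copies of a fixed symbol $a$ appearing in the optimal subsequence $D$ of $\widehat{A}$ and $\widehat{B}$ must be drawn from a single length-$w(a)$ block of $\widehat{A}$, because $D$ ``cannot jump out of one such block and back into a later one without an intervening strictly larger symbol''. But $D$ is a subsequence: it may simply skip whatever lies between two blocks of $a$'s in $\widehat{A}$, and the weak-increase condition constrains only the selected symbols, not the skipped ones. Concretely, take $\Sigma=\{1,2\}$ with $1<2$, $w(1)=2$, $w(2)=1$, $A=\sequence{1,2,1}$, $B=\sequence{1,1}$. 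Then $\widehat{A}=\sequence{1,1,2,1,1}$ and $\widehat{B}=\sequence{1,1,1,1}$, and the optimal $D=\sequence{1,1,1,1}$ necessarily uses both blocks of $1$'s in $\widehat{A}$; your decomposition gives $C=\sequence{1}$ with $k_1=4>2=w(C_1)$, and your final chain $\WLCWIS(A,B)\ge\sum_i w(C_i)\ge\sum_i k_i$ would read $4\ge 2\ge 4$, which fails in the middle. (The lemma itself is safe here: $\sequence{1,1}$ is a common subsequence of $A$ and $B$ of weight $4$.)

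The missing idea is to account for a symbol being harvested from several original positions on each side: if the copies of $a$ in $D$ come from $m_A$ distinct positions of $A$ and $m_B$ distinct positions of $B$, then their number is at most $\min(m_A,m_B)\cdot w(a)$, and one can instead select $\min(m_A,m_B)$ occurrences of $a$ in a common subsequence of $A$ and $B$, recovering at least that much weight. One must then check that these selections are mutually consistent across the different symbols, which is why the paper organizes the argument as an induction on the alphabet size: peel off the smallest symbol $a$, note that the prefix of $a$'s in the optimal $D$ may be assumed to have length a multiple of $w(a)$ (else it can be extended), cut off the shortest prefixes of $A$ and $B$ containing the corresponding number of occurrences of $a$, delete $a$ from the remaining suffixes, and recurse. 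Your block decomposition can be repaired along these lines, but as written the step $k_i \le w(C_i)$ is a genuine gap, not mere bookkeeping.
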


\newcommand{\wA}{\widehat{A}}
\newcommand{\wB}{\widehat{B}}

\begin{proof}
We have to show that $\WLCWIS(A, B) = \LCWIS(\wA, \wB)$.
Every common weakly increasing subsequence $C$ of $A$ and $B$ translates
to a common weakly increasing subsequence $\widehat{C}$ of $\wA$ and $\wB$,
and the length of $\widehat{C}$ equals the total weight of $C$, thus
$\WLCWIS(A, B) \le \LCWIS(\wA, \wB)$.

The proof of inequality $\WLCWIS(A, B) \ge \LCWIS(\wA, \wB)$ is by induction
on the alphabet size. Denote by $C$ the longest common weakly increasing
subsequence of $\wA$ and $\wB$, by $a$ the smallest symbol in the alphabet,
and by $k$ the number of occurrences of symbol $a$ at the beginning of $C$.
Note that $k$ must be a multiple of $w(a)$, because otherwise $C$ could be
extended. To finish the proof, cut off the shortest prefixes of $A$ and $B$
containing at least $k/w(a)$ occurrences of symbol $a$, remove all symbols $a$
from remaining suffixes, and apply inductive hypothesis to get a sequence
that can be appended to $a^{k/w(a)}$ to get a common weakly increasing
subsequence of $A$ and $B$ of the desired total weight.
\end{proof}

The curious reader looking for a more detailed argument is referred
to~\cite{Abboud15}, where an analogous lemma for the weighted version of LCS is
presented. Its proof can be directly translated to another proof for the case of
WLCWIS.

\section{Reduction from MAX-CNF-SAT to LCWIS}

This section is devoted to proving Theorem~\ref{th:max-cnf-sat}.
We do this by showing a reduction from the Most-Orthogonal Vectors problem,
introduced by Abboud, Backurs, and Vassilevska Williams~\cite{Abboud15}.

\begin{definition}[Most-Orthogonal Vectors]
Given two sets of vectors $U, V \subseteq \{0,1\}^d$, both of the same size $n$,
and an integer $r \in \{0, 1, \ldots, d\}$, are there two vectors
$u \in U$ and $v \in V$ such that their inner product does not exceed $r$,
i.e.~$u \cdot v := \sum_{i=1}^{d} u_i \cdot v_i \le r$?
\end{definition}

\begin{lemma}[Abboud, Backurs, Vassilevska Williams~\cite{Abboud15}]
\label{th:sat-to-ovp}
If Most-Orthogonal Vectors on $n$ vectors in $\{0,1\}^d$ can be solved in
$T(n, d)$ time, then given a $CNF$ formula on $N$ variables and $M$ clauses,
we can compute the maximum number of satisfiable clauses (MAX-CNF-SAT),
in $O(T(2^{N/2}, M) \cdot \log M)$ time.
\end{lemma}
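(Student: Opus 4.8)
The plan is to use the classical \emph{split-and-list} construction, combined with a binary search over the threshold~$r$ that converts the Most-Orthogonal Vectors decision oracle into a subroutine computing a minimum inner product. Write the formula as $\phi = C_1 \wedge \cdots \wedge C_M$ over variables $x_1, \ldots, x_N$, and partition the variables into $P = \{x_1, \ldots, x_{\lceil N/2 \rceil}\}$ and $Q = \{x_{\lceil N/2 \rceil + 1}, \ldots, x_N\}$. To each partial assignment $\alpha$ of $P$ I associate a vector $u_\alpha \in \{0,1\}^M$ whose $j$-th coordinate is $1$ if $\alpha$ does \emph{not} satisfy clause $C_j$ (i.e.\ every literal of $C_j$ over a variable of $P$ is set to false by $\alpha$) and $0$ otherwise; symmetrically, each partial assignment $\beta$ of $Q$ yields $v_\beta \in \{0,1\}^M$ with $(v_\beta)_j = 1$ iff $\beta$ does not satisfy $C_j$. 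Put $U = \{u_\alpha\}$ and $V = \{v_\beta\}$; building both sets takes time polynomial in the input size times $2^{N/2}$.

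The crux of the argument --- essentially its only content --- is the observation that the combined assignment $(\alpha,\beta)$ satisfies clause $C_j$ exactly when $(u_\alpha)_j = 0$ or $(v_\beta)_j = 0$, so $u_\alpha \cdot v_\beta$ counts precisely the clauses left unsatisfied by $(\alpha,\beta)$. Ranging over all partial assignments of $P$ and $Q$, which together realize all $2^N$ assignments of $\phi$, the maximum number of simultaneously satisfiable clauses equals $M - r^\star$ where $r^\star = \min_{u \in U,\, v \in V} u \cdot v$. Since the Most-Orthogonal Vectors instance $(U, V, r)$ has a solution if and only if $r \ge r^\star$, this predicate is monotone in $r$, and hence $r^\star$ can be found by binary search over $r \in \{0, 1, \ldots, M\}$ using $O(\log M)$ oracle calls, each on $n = \Theta(2^{N/2})$ vectors in dimension $d = M$ and thus costing $T(2^{N/2}, M)$ (up to constants in the first argument, which are harmless for any monotone, at-least-linear $T$).

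Two minor points need attention. The Most-Orthogonal Vectors problem insists on $|U| = |V|$, which fails when $N$ is odd: I would pad the smaller of the two sets with copies of the all-ones vector $\mathbf{1} \in \{0,1\}^M$, whose inner product with any $v \in V$ is $\sum_j v_j \ge u \cdot v$ for every genuine $u$, so $r^\star$ is unchanged and $n$ at most doubles. Finally, the polynomial-time construction of $U$ and $V$ is absorbed by the $O(\log M)$ oracle calls, since $T(2^{N/2}, M) \ge 2^{N/2} M$; this yields the claimed $O(T(2^{N/2}, M) \cdot \log M)$ bound. I do not anticipate a genuine obstacle here --- the one step requiring care is the clause-by-clause verification of the previous paragraph, and everything else is bookkeeping.
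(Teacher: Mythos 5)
The paper states this lemma without proof, importing it directly from Abboud, Backurs, and Vassilevska Williams~\cite{Abboud15}; your split-and-list construction with a binary search over the threshold $r$ is precisely the argument given in that reference, and it is correct (including the handling of odd $N$ and the absorption of the set-construction time into the oracle calls). Nothing further is needed.
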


In our reduction we use alphabet $\Sigma$ consisting of the integers from
$3$ to $3d+2$.
First we define two kinds of \emph{coordinate gadgets}:
$\CG_1, \CG_2 : \{0,1\} \times \{1,2,\ldots,d\} \to \Sigma^*$.
\begin{align*}
\CG_1(0, i) &= \sequence{3i, 3i+1}\\
\CG_1(1, i) &= \sequence{3i+2}\\
\CG_2(0, i) &= \sequence{3i, 3i+2}\\
\CG_2(1, i) &= \sequence{3i+1}
\end{align*}
Observe that
$$LCWIS(\CG_1(x, i), \CG_2(y, i)) = \begin{cases}
  0,& \text{if } x = 1 \text{ and } y = 1,\\
  1,& \text{otherwise.}
\end{cases}$$

Now we can define two kinds of \emph{vector gadgets}:
$\VG_1, \VG_2 : \{0,1\}^d \to \Sigma^*$.
\begin{align*}
\VG_1(u_1, u_2, \ldots, u_d) &=
  \CG_1(u_1, 1)\ \CG_1(u_2, 2)\ \ldots\ \CG_1(u_d, d) \\
\VG_2(u_1, u_2, \ldots, u_d) &=
  \CG_2(u_1, 1)\ \CG_2(u_2, 2)\ \ldots\ \CG_2(u_d, d)
\end{align*}
Observe that
$$LCWIS(\VG_1(u), \VG_2(v)) = d - (u \cdot v).$$

The next lemma helps to us combine many vector gadgets into two sequences
such that computing their WLCWIS lets us find two vector gadgets with the
largest LCWIS, which corresponds to the pair of most orthogonal vectors.

\begin{lemma}
\label{th:main-lemma}
Let $S=\{s_1, s_2, \ldots, s_n\} \subseteq \Sigma^*$, $T=\{t_1, t_2, \ldots, t_n\} \subseteq \Sigma^*$.
Augment the alphabet $\Sigma$ with four additional symbols $\A, \B, \Y, \Z$,
such that $\A < \B < \sigma < \Y < \Z$ for all $\sigma \in \Sigma$.
Denote by $\ell$ the maximum total weight of any sequence in $S \cup T$,
and let the weights of the new symbols be $w(\A) = w(\Z) = \ell$
and $w(\B) = w(\Y) = 2\ell$.
Finally, define two sequences $P_1$ and $P_2$,
$$P_1 = \A^{2n}\ s_1\ \Y\B\ s_2\ \Y\B\ \ldots\ \Y\B\ s_n\ \Z^{2n},$$
$$P_2 = (\Z\Y\B\A)^{n}\ t_1\ \Z\Y\B\A\ t_2\ \Z\Y\B\A\ \ldots\ \Z\Y\B\A\ t_n\ (\Z\Y\B\A)^{n}.$$
Then
$$
\WLCWIS(P_1, P_2) =
\max_{\substack{1 \le i, j \le n}} \WLCWIS(s_i, t_j) + (4n-2)\cdot\ell.
$$
\end{lemma}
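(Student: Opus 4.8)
The plan is to prove the two inequalities separately. For the lower bound $\WLCWIS(P_1,P_2) \ge \max_{i,j} \WLCWIS(s_i,t_j) + (4n-2)\ell$, I would fix the pair $(i,j)$ achieving the maximum and exhibit an explicit common weakly increasing subsequence. The idea is to take $\A^{2n}$ from the prefix of $P_1$ — these match $2n$ of the many $\A$'s available in $P_2$ (there are $n$ in the front block $(\Z\Y\B\A)^n$ plus more later, so $2n$ copies can be extracted from an initial segment of $P_2$ that ends before $t_i$ begins). Then, between $s_i$'s block and the $\A^{2n}$, $P_1$ offers the separators $\Y\B$ repeated $i-1$ times before $s_i$; but since the subsequence must be weakly increasing and $\A < \B < \sigma < \Y < \Z$, after using $\A$'s we may use some $\B$'s, then symbols of $\Sigma$, then $\Y$'s, then $\Z$'s, in that order only. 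So the scheme is: use all $2n$ copies of $\A$; then use $\B$'s; then a common weakly increasing subsequence of $s_i$ and $t_j$; then $\Y$'s; then $\Z$'s. Counting weights, $\A^{2n}$ contributes $2n\ell$, $\Z^{2n}$ contributes $2n\ell$, and we need $(4n-2)\ell - 4n\ell = -2\ell$ more, i.e.\ we must \emph{lose} the equivalent of two $\A$-weights somewhere; in fact the correct accounting is that we cannot take all of $\A^{2n}$ and all of $\Z^{2n}$ simultaneously together with the $\Y$'s and $\B$'s, and the extremal configuration uses $\A^{2n}$, no $\B$, the inner subsequence, and then $\Y$'s and $\Z$'s, or symmetrically. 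I would carefully pick the alignment in $P_2$ so that exactly $2n$ $\A$'s come from before $t_j$ and $2n$ $\Z$'s come from after $t_j$ (using the $(\Z\Y\B\A)^n$ blocks at both ends), with the $\B$'s and $\Y$'s contributing the remaining $(2n-2)\ell$ via the internal separators, landing exactly on $(4n-2)\ell$.

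For the upper bound $\WLCWIS(P_1,P_2) \le \max_{i,j} \WLCWIS(s_i,t_j) + (4n-2)\ell$, I would take an optimal common weakly increasing subsequence $C$ of $P_1$ and $P_2$ and analyze its structure using the total order $\A < \B < \Sigma < \Y < \Z$. Since $C$ is weakly increasing, it splits into a block of $\A$'s, then a block of $\B$'s, then a block over $\Sigma$, then a block of $\Y$'s, then a block of $\Z$'s. The $\Sigma$-part of $C$ is a common weakly increasing subsequence of some contiguous stretch of $P_1$ (lying in $s_{i_1}\,\Y\B\,s_{i_1+1}\cdots s_{i_2}$, but the $\Y\B$ separators are outside $\Sigma$, so actually lying within a union of consecutive $s$-blocks) and similarly a stretch of $P_2$ over consecutive $t$-blocks. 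Here the key sub-claim is that the $\Sigma$-part can touch at most \emph{one} $s$-block and at most one $t$-block without forcing a large weight loss in the $\A/\B/\Y/\Z$ bookkeeping: crossing from $s_i$ to $s_{i+1}$ in $P_1$ consumes a separator $\Y\B$ which, to keep $C$ weakly increasing, must be placed \emph{after} all $\Sigma$-symbols of $C$, so it cannot actually be skipped and re-entered. I would make this precise by a potential/counting argument: the $\A$'s in $C$ are bounded by $2n$ (only $2n$ exist in $P_1$), contributing $\le 2n\ell$; likewise $\Z$'s contribute $\le 2n\ell$; and the $\B$'s and $\Y$'s in $C$ together contribute weight $2\ell$ each, but because in $P_1$ every $\B$ is immediately preceded by a $\Y$ and there are only $n-1$ such $\Y\B$ pairs interleaved with the $s$-blocks, the number of $\B$'s plus the constraint on where $\Y$'s can sit forces $w(\B\text{-part}) + w(\Y\text{-part}) \le (2n-2)\ell$ \emph{minus} $2\ell$ times the number of extra $s$- or $t$-blocks the $\Sigma$-part spans beyond one. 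Summing, the non-$\Sigma$ weight is at most $(4n-2)\ell$ minus a penalty for each additional block spanned, and the $\Sigma$-weight is at most the total $\WLCWIS$ over all the spanned $(s_i,t_j)$ pairs; concavity/the penalty exactly cancels the gain from spanning more blocks, so the whole thing is $\le \max_{i,j}\WLCWIS(s_i,t_j) + (4n-2)\ell$.

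The main obstacle I anticipate is the upper-bound structural analysis — specifically, proving rigorously that it never helps to let the $\Sigma$-part of $C$ straddle more than one $s$-block and one $t$-block. The weights $w(\A)=w(\Z)=\ell$, $w(\B)=w(\Y)=2\ell$ and the counts $2n$ (for the $\A^{2n}$ and $\Z^{2n}$ pads) and the $\Z\Y\B\A$ separator pattern in $P_2$ versus the $\Y\B$ pattern in $P_1$ are all tuned so that the arithmetic balances exactly; getting the inequality chain to close requires tracking, for each choice of how many leading $\A$'s, trailing $\Z$'s, and internal $\B/\Y$ symbols $C$ uses, that the "budget" spent to move from one $s$-block to the next is at least $2\ell$, which dominates any $\WLCWIS(s_i,t_j) \le \ell$ gain from an extra block. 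I would organize this as: (1) reduce to $C$ having the five-block form; (2) show the $\A$-block has weight $\le 2n\ell$ and symmetrically for $\Z$; (3) show that if the $\Sigma$-part spans $p$ consecutive $s$-blocks and $q$ consecutive $t$-blocks, then the $\B$- and $\Y$-blocks together lose at least $2(p-1)\ell$ and $2(q-1)\ell$ respectively relative to their maximum $(2n-2)\ell$ — this is the heart of the argument and uses that separators consumed "inside" the $\Sigma$-span are unavailable; (4) bound the $\Sigma$-part by $\sum$ of $\WLCWIS$ over at most $\min(p,q)$ disjoint $(s,t)$ pairs, each $\le \ell$, hence by $(\text{that count})\cdot\ell$, which is dominated; (5) combine.
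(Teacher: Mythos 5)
Your overall strategy (an explicit witness for the lower bound, a block decomposition plus counting for the upper bound) matches the paper's, but both halves have concrete gaps that prevent the argument from closing.

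For the lower bound, the witness you describe does not exist and its weight is miscounted. Before $t_j$ the sequence $P_2$ contains only $n+j-1$ copies of $\A$, and after $t_j$ only $2n-j$ copies of $\Z$; these sum to $3n-1$, so you can never match all of $\A^{2n}$ and all of $\Z^{2n}$ simultaneously, let alone add $(2n-2)\ell$ worth of $\B$'s and $\Y$'s on top (your final tally $2n\ell+2n\ell+(2n-2)\ell=(6n-2)\ell$ overshoots the target, and your earlier remark that one must ``lose $2\ell$ somewhere'' is never resolved into an actual construction). The correct witness, for the maximizing pair $(i,j)$ with inner subsequence $Q$, is
$$Q' = \A^{\,n+j-i}\ \B^{\,i-1}\ Q\ \Y^{\,n-i}\ \Z^{\,n-j+i},$$
which takes exactly one symbol from each of the $3n-1$ $\Z\Y\B\A$-fragments of $P_2$ and one from each of the $n-1$ $\Y\B$-fragments of $P_1$, and whose special-symbol weight is $\ell(n+j-i)+2\ell(i-1)+2\ell(n-i)+\ell(n-j+i)=(4n-2)\ell$.

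For the upper bound, the decisive constraint you are missing comes from $P_2$, not $P_1$: every occurrence of $\A,\B,\Y,\Z$ in $P_2$ sits inside one of the $3n-1$ strictly \emph{decreasing} fragments $\Z\Y\B\A$, and a weakly increasing subsequence can take at most one symbol from each such fragment, so $c_\A+c_\B+c_\Y+c_\Z\le 3n-1$, where $c_x$ denotes the number of occurrences of $x$ in the subsequence. Your bounds of $2n\ell$ for the $\A$-block and $2n\ell$ for the $\Z$-block, taken from $P_1$ alone, already sum to $4n\ell>(4n-2)\ell$, and the penalty you propose only reduces the $\B/\Y$ budget, so the inequality chain you sketch cannot reach $(4n-2)\ell$. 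The paper instead writes the special-symbol weight as $(c_\A+c_\B+c_\Y+c_\Z)\ell+(c_\B+c_\Y)\ell$ and combines the $P_2$-constraint above with $c_\B+c_\Y\le n-1$ from the $\Y\B$-fragments of $P_1$; when $c_S$ of the $s$-fragments and $c_T$ of the $t$-fragments contribute, these counts drop to $n-c_S$ and $3n-c_T$ respectively, while the $\Sigma$-part is at most $\min(c_S,c_T)\cdot\ell\le(c_S+c_T-2)\ell$, and the three terms cancel to exactly $(4n-2)\ell$. Your step (3) is the right instinct for the $P_1$ side, but without the per-fragment bound on $P_2$ the argument does not close.
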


\begin{proof}
The proof consists of two parts, first we prove that a common weakly increasing
subsequence of $P_1$ and $P_2$ with the claimed total weight exists,
then we prove that there is no such subsequence with a larger total weight.

Take $i$ and $j$ maximizing $\WLCWIS(s_i, t_j)$,
denote by $Q$ the corresponding subsequence of $s_i$ and $t_j$,
and consider the following sequence:
$$Q' = \A^{n+j-1-(i-1)}\ \B^{i-1}\ Q\ \Y^{n-i}\ \Z^{2n-j-(n-i)}.$$
The construction of $Q'$ is motivated by the following intuition.
We start with $Q$, and append at the beginning one $\B$ for each $\Y\B$-fragment
appearing before $s_i$ in $P_1$. There are $i-1$ such fragments.
On the other hand, the number of $\Z\Y\B\A$-fragments appearing before $t_j$
in $P_2$ equals $n+j-1$.
We devote the rightmost $i-1$ of these fragments to find matching $\B$
symbols, and for each of the remaining fragments we take one
$\A$ and append it at the beginning of $Q'$. The $\A^{2n}$ prefix of $P_1$ is
long enough to match all these $\A$ symbols.
In an analogous way we append $\Y$ and $\Z$ elements at the end of $Q'$.

Note that $Q'$ is weakly increasing and it appears in both $P_1$ and $P_2$ as
a subsequence. The total weight of $Q'$ equals
\begin{multline*}
\ell (n+j-1-(i-1)) + 2\ell (i-1) + \WLCWIS(s_i, t_j) +
2\ell (n-i) + \ell (2n-j-(n-i)) = \\
= \ell \cdot 2n + 2\ell \cdot (n-1) + \WLCWIS(s_i, t_j) = \WLCWIS(s_i, t_j) + (4n-2) \cdot \ell.
\end{multline*}
This finishes the first part of the proof.

Now we prove that there is no common weakly increasing subsequence of
$P_1$ and $P_2$ with a larger total weight. Consider any such subsequence.
Denote by $c_S$ the number of $s_i$-fragments in $P_1$ that contribute at least
one symbol to that subsequence. Analogously, denote by $c_T$ the number of
$t_j$-fragments in $P_2$ contributing at least one symbol.
Finally, denote by $c_\A, c_\B, c_\Y, c_\Z$ the number of
symbols $\A, \B, \Y, \Z$ in the subsequence, respectively.
Observe that the contribution of $\A, \B, \Y, \Z$ symbols to the total weight
of the subsequence is
\begin{multline}
\label{eq:abyz}
c_\A \cdot w(\A) + c_\B \cdot w(\B) + c_\Y \cdot w(\Y) + c_\Z \cdot w(\Z) = \\
= c_\A \cdot \ell + c_\B \cdot 2\ell + c_\Y \cdot 2\ell + c_\Z \cdot \ell = \\
= (c_\A + c_\B + c_\Y + c_\Z) \cdot \ell + (c_\B + c_\Y) \cdot \ell.
\end{multline}
Now the proof splits into two cases:

\paragraph*{Case 1: $c_S \le 1$ and $c_T \le 1$.}
The contribution of $s_i$- and $t_j$-fragments to the total weight of the
subsequence is at most $\max_{\substack{1 \le i, j \le n}} \WLCWIS(s_i, t_j)$.
The remaining weight of the subsequence must come from the symbols
$\A, \B, \Y, \Z$.
Each $\Z\Y\B\A$-fragment in $P_2$ can contribute at most one symbol to the,
weakly increasing, subsequence. There are $3n - 1$ such fragments, therefore
$$c_\A + c_\B + c_\Y + c_\Z \le 3n - 1.$$
Similarly, each $\Y\B$-fragment in $P_1$ can contribute at most one symbol.
There are $n - 1$ such fragments in $P_2$, therefore
$$c_\B + c_\Y \le n - 1.$$
Combining these two inequalities with the equation~(\ref{eq:abyz}) we get that
the contribution of $\A, \B, \Y, \Z$ symbols to the total weight of the
subsequence is at most
$(3n - 1) \cdot \ell + (n - 1) \cdot \ell = (4n - 2) \cdot \ell$.
This together with the contribution of the $s_i$- and $t_j$-fragments
gives the required upper bound for $\WLCWIS(P_1, P_2)$.
\paragraph*{Case 2: $c_S \ge 2$ or $c_T \ge 2$.}
The contribution of $s_i$- and $t_j$-fragments to the total weight of the
subsequence is at most $\min(c_S, c_T) \cdot \ell$.
Since $\min(c_S, c_T) = c_S + c_T - \max(c_S, c_T)$ and at least one of
$c_S, c_T$ is at least $2$, we can bound $\min(c_S, c_T)$ from above by
$c_S + c_T - 2$.
Thus the considered contribution is at most
$$((c_S - 1) + (c_T - 1))\cdot\ell.$$
If two $s_i$-fragments contribute at least one symbol each,
no $\Y\B$-fragment between them can contribute,
because the subsequence has to be weakly increasing.
Therefore, the number of $\Y\B$-fragments contributing is at most
$n - 1 - (c_S - 1)$, thus
$$c_\B + c_\Y \le n - 1 - (c_S - 1).$$
Similarly
$$c_\A + c_\B + c_\Y + c_\Z \le 3n - 1 - (c_T - 1).$$
Using the equation~(\ref{eq:abyz}) we get that the total weight of
the subsequence is at most
$$((c_S - 1) + (c_T - 1)) \cdot \ell +
(3n - 1 - (c_T - 1)) \cdot \ell +
(n - 1 - (c_S - 1)) \cdot \ell,$$
which is $(4n - 2) \cdot \ell$.
\end{proof}

Now we are ready to show our reduction from Most-Orthogonal Vectors to LCWIS.

\begin{lemma}
\label{th:ovp-to-lcwis}
If LCWIS of two sequences of length $n$ can be computed in $O(n^{2-\eps})$ time,
then Most-Orthogonal Vectors for $n$ vectors in $\{0,1\}^d$
problem can be solved in $O((dn)^{2-\eps})$.
\end{lemma}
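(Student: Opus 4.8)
The plan is to assemble the pieces already developed in this section. Given an instance of Most-Orthogonal Vectors with sets $U, V \subseteq \{0,1\}^d$ of size $n$ and threshold $r$, I first apply the vector gadget construction: let $s_i = \VG_1(u_i)$ for the $i$-th vector $u_i \in U$, and $t_j = \VG_2(v_j)$ for $v_j \in V$. Each such sequence has length $O(d)$ (two symbols per coordinate gadget), so the maximum total weight $\ell$ of any $s_i$ or $t_j$ is $O(d)$ — here all original symbols of $\Sigma$ have weight $1$. By the observation preceding Lemma~\ref{th:main-lemma}, $\LCWIS(s_i, t_j) = \WLCWIS(s_i, t_j) = d - (u_i \cdot v_j)$, so $\max_{i,j} \WLCWIS(s_i, t_j) = d - \min_{i,j} (u_i \cdot v_j)$. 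Hence the most orthogonal pair has inner product at most $r$ if and only if $\max_{i,j}\WLCWIS(s_i,t_j) \ge d - r$.

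Next I feed these sequences into the construction of Lemma~\ref{th:main-lemma}, obtaining sequences $P_1, P_2$ over the alphabet $\Sigma$ augmented with $\A, \B, \Y, \Z$. The lemma gives $\WLCWIS(P_1, P_2) = \max_{i,j}\WLCWIS(s_i, t_j) + (4n-2)\cdot\ell$, a single number from which $\max_{i,j}\WLCWIS(s_i,t_j)$ is recovered by subtraction, and so the Most-Orthogonal Vectors instance is answered by checking whether this value is at least $d - r$. To turn this into an $\LCWIS$ computation I invoke Lemma~\ref{th:wlcwis-to-lcwis}: replace each symbol $a$ in $P_1$ and $P_2$ by $w(a)$ copies to get $\widehat{P_1}, \widehat{P_2}$ with $\WLCWIS(P_1, P_2) = \LCWIS(\widehat{P_1}, \widehat{P_2})$.

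It remains to bound the lengths of $\widehat{P_1}$ and $\widehat{P_2}$. The sequence $P_1$ consists of the $2n$ $\A$'s of weight $\ell$ each, the $n$ inner sequences $s_i$ each of total weight at most $\ell$, the $n-1$ pairs $\Y\B$ of total weight $3\ell$ each, and $2n$ $\Z$'s of weight $\ell$; the total weight of $P_1$ is therefore $O(n\ell)$, and similarly for $P_2$ whose $O(n)$ blocks $\Z\Y\B\A$ each have weight $6\ell$ and whose $n$ inner sequences have weight at most $\ell$. Since $\ell = O(d)$, both $\widehat{P_1}$ and $\widehat{P_2}$ have length $O(nd)$. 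Applying the hypothesized $O(m^{2-\eps})$ algorithm for $\LCWIS$ to these two sequences of length $m = O(nd)$ runs in time $O((nd)^{2-\eps})$, and all the gadget constructions and the symbol-duplication take time linear in the output size, i.e.\ $O(nd)$. This yields the claimed $O((dn)^{2-\eps})$ bound.

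I do not anticipate a serious obstacle — the substantive work was done in Lemmas~\ref{th:wlcwis-to-lcwis} and~\ref{th:main-lemma} and in the vector gadget construction. The one point requiring care is the bookkeeping of weights: one must confirm that $\ell$ really is only $O(d)$ (so that the blow-up from $\WLCWIS$ to $\LCWIS$ is benign) and that the augmented symbols, whose weights are $\ell$ or $2\ell$ and which appear $O(n)$ times, contribute only $O(nd)$ to the final lengths rather than something larger. Once that is checked, the composition of the three reductions is routine.
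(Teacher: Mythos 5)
Your proposal is correct and follows exactly the paper's route: vector gadgets with unit weights, Lemma~\ref{th:main-lemma} to combine them, and Lemma~\ref{th:wlcwis-to-lcwis} to unweight, with the same $O(dn)$ length bound; the paper's own proof is in fact terser and omits the explicit threshold check and weight bookkeeping that you spell out. (One trivial slip: each $\Y\B$ pair has weight $4\ell$, not $3\ell$, since $w(\B)=w(\Y)=2\ell$ --- this does not affect the $O(n\ell)$ bound.)
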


\begin{proof}
Given two sets of vectors $U$ and $V$, we use $VG_1$ to construct vector gadgets
for vectors from set $U$, and $VG_2$ to construct vector gadgets for vectors
from set $V$. We get two sets, each containing $n$ vector gadgets of length at
most $2d$. Now we put weight $1$ for every symbol used to construct the vector
gadgets, and use Lemma~\ref{th:main-lemma} to construct two sequences, of total
weight $O(dn)$, such that their WLCWIS can be used to calculate the inner
product of the pair of most orthogonal vectors. Finally, we use
Lemma~\ref{th:wlcwis-to-lcwis} to reduce computing the WLCWIS of these two
sequences to computing LCWIS of two sequences of length $O(dn)$.
\end{proof}

\begin{proof}[Proof of Theorem~\ref{th:max-cnf-sat}]
This theorem is a direct conclusion from Lemma~\ref{th:sat-to-ovp}
and Lemma~\ref{th:ovp-to-lcwis}.
\end{proof}

\section{Reduction from BP-SAT to LCWIS}
Abboud et al.~\cite{Abboud16} prove that if LCS for two sequences of length $n$
can be computed in $O(n^{2-\eps})$ time, then given a nondeterministic Branching
Program of length $T$ and width $W$ on $N$ variables it is possible to decide if
there is an assignment to the variables that makes the program accept in
$O(2^{(1-\eps/2)N} \mbox{poly}(T^{\log W}))$ time.
The existence of such an algorithm would have much more remarkable consequences
in computational complexity than just refuting SETH.
For an in-depth discussion of these consequences we refer the curious reader
to~\cite{Abboud16}.

The above-mentioned reduction exploits properties of LCS in three ways:
when reducing LCS-Pair to LCS (Lemma~1 in~\cite{Abboud16}),
when constructing the AND gadgets (Lemma~2 in~\cite{Abboud16}),
and when constructing the OR gadgets (Lemma~3 in~\cite{Abboud16}).

Our techniques can be used to show a similar reduction to LCWIS.
We follow the general construction of~\cite{Abboud16}, and replace parts
specific to LCS with LCWIS analogues. Lemma~1 in~\cite{Abboud16} is replaced
with our Lemma~\ref{th:main-lemma}. The AND gadget for LCWIS can be obtained by
concatenating the corresponding sequences and using the disjoint union of their
alphabets as the alphabet for concatenation. In the new alphabet all symbols
from the first sequence should be made smaller than all symbols from the second
sequence. To obtain the OR gadget we use the same disjoint union construction
for the alphabet, and apply our Lemma~\ref{th:main-lemma} to resulting
sequences. Finally, we follow the proof of Theorem~3 in~\cite{Abboud16} to glue
these three pieces together and get the reduction from BP-SAT to LCWIS.

\section{Open problems}

The reductions which we use in this paper do not work with a constant size alphabet.
In the case of LCS it is possible to obtain conditional quadratic time lower
bounds even for binary strings~\cite{Bringmann15}. It is very unlikely
to obtain similar results for LCWIS given the linear time algorithm for the
$3$-letter alphabet~\cite{Duraj13}. However, it is an open problem to show
quadratic time hardness of LCWIS for an alphabet of some larger constant size.

Even though $O(n^2)$ algorithms for LCIS and LCWIS are virtually identical,
our reductions cannot be easily modified to work with LCIS.
The status of LCIS is thus open. It is possible that it can be solved in
strongly subquadratic time, but it is also possible that there is a reduction
proving that such an algorithm is unlikely.

The existing techniques for constructing reductions appear insufficient for both
of these open problems and some new ideas seem necessary.

\bibliographystyle{acm}
\bibliography{lcwis}

\end{document}